\newtheorem{prop}{Proposition}
\DeclareMathOperator{\tr}{Tr}
\def\section{\@startsection{section}{1}{\z@}{-3.25ex plus -1ex minus
		-.2ex}{1.5ex plus .2ex}{\normalfont\bfseries}}
\def\subsection{\@startsection{subsection}{1}{\z@}{-3.25ex plus -1ex
		minus -.2ex}{1.5ex plus .2ex}{\normalfont\itshape}}
\date{}
\title{ Phase Transition in Random Noncommutative Geometries}
\author{ Masoud Khalkhali and  Nathan Pagliaroli\\
Department of Mathematics, University of Western Ontario\\
London, Ontario, Canada\footnote{\emph{Email addresses}:  masoud@uwo.ca, npagliar@uwo.ca}}
\begin{document}
	\maketitle
	\begin{abstract}
	We present an analytic  proof of the existence of phase transition in the large $N$ limit of certain random noncommutaitve geometries. These geometries can be expressed as  ensembles of Dirac operators. When they  reduce to single matrix ensembles, one can apply the Coulomb gas method to find the empirical spectral distribution.  We elaborate on the nature of  the large $N$ spectral distribution of the Dirac operator itself. Furthermore, we  show that these models exhibit both a single and double cut region for certain values of the order parameter and find the exact value where the transition occurs. 
	\end{abstract}
	\section{Introduction}
	In noncommutative geometry, the notion  of  {\it spectral triple} simultaneously encodes  the data of a  Riemannian metric,  a spin structure, and a  Dirac operator on  a noncommutative space   \cite{Connes95}. The space can be discrete or continuous, finite or infinite dimensional,  and the possibility of doing geometry and topology over noncommutative spaces equipped with a spectral triple is certainly an attractive idea. In fact allowing noncommutativity is essential in order to deal with spaces such as {\it bad quotients} of standard spaces. In this version of geometry, {\it Dirac operators}, suitably abstracted and understood, take the centre  stage. For example,  at the  most basic level, we have  the distance formula of Connes  \cite{Connes94}
	$$ d(p, q) = \text{Sup}\{|f(p)-f(q)|; \, ||[D, f]|| \leq 1\},$$
which shows that the geodesic distance on a Riemannian (spin) manifold  can be 	recovered from the action of its Dirac operator $D$ on the $L^2$ space of spinors.  At a more fundamental level,  the   {\it reconstruction theorem}  of  Connes \cite{Connes-2013}  shows that   a spin Riemannian manifold can be fully recovered if we are given a commutative spectral triple satisfying some natural conditions. 
	  Thus one  can consider a   spectral triple  as a noncommutative spin Riemannian manifold, and the Dirac operator as an avatar of the Riemannian metric. In particular the space of Riemannian metrics on a fixed space can be equally thought of as the space of Dirac operators.

Motivated by  this  idea, Barrett and Glaser proposed  a toy model of Euclidean quantum gravity  as an ensemble of Dirac operators on a fixed {\it finite} noncommutative background space \cite{Barrett2015, Barrett2016}. In this paper we simply think of this model as a {\it random noncommutative geometry}. The {\it partition function} of these ensembles are   of the form 
\begin{equation}\label{geometries}
\mathcal{Z} = \int e^{- \mathcal{S} (D)}  d D \, ,
\end{equation}
where the action functional ${\mathcal{S} (D)}$ is defined in terms of the spectrum of the Dirac operator $D$, and the integral is over the moduli space of Dirac operators compatible with a  fixed  {\it finite noncommutative geometry}. In particular Dirac operators are dynamical variables and play the role of metric  fields and   the moduli of Dirac operators is typically a finite dimensional vector space. It should be stressed that the action functional $\mathcal{S}$ is chosen in such a way that the partition function \eqref{geometries} is absolutely convergent and finite. A good choice for exampe would be $\mathcal{S}(D)= \text{Tr} (f (D))$ for a  real polynomial $f$ of even degree with positive leading coefficient. 

Using the classifications of finite spectral triples  and their Dirac operators  one can express these models as {\it multi-trace} and {\it multi-matrix}  random matrix models 
\cite{Barrett2015, Barrett2016} (cf. also  \cite{Shahab thesis, AK}). This shows that   the analytic study of these models 
as convergent matrix integrals is quite hard in general. In \cite{Barrett2016}, computer study of these models is carried via Monte Carlo Markov chain methods. They indicate phase transition and multi-cut regimes in their spectral distribution. 
In \cite{Shahab thesis, AK},  formal aspects of these models and their generalizations  is studied through topological recursion techniques.  In \cite{Sanchez}  the algebraic  structure of the action functional of these models is further analyzed.

In this paper   we look at the simplest types of these models and derive  explicit analytic expressions for their  large $N$ limits  of empirical spectral distributions.    
In the classification scheme of \cite{Barrett2015}, the underlying finite  spectral triples we consider are  of  types $(1, 0)$ and $(0, 1)$. Using  the Coulomb gas method outlined in \cite{Deift}, 
we rigorously prove the existence of a {\it phase transition}  between a single and double cut phase of the proposed models by finding the {\it equilibrium measure} in both cases. The transition happens at the critical value of $-5\sqrt{2}/2$ for both models. Furthermore, both models have the same global 
behavior of eigenvalues in the large $N$ limit. With explicit formulas for the empirical spectral distributions of these models,  we then elaborate  on  the large $N$ limiting distribution of the Dirac operators.

The Dirac operator of type ${(1,0)}$  matrix geometries is given by an anticommutator \cite{Barrett2016}:
\begin{equation*}
D = \{ H , \cdot \} \, , \quad H \in \mathcal{H}_N \, ,
\end{equation*}
and for   $(0, 1)$ matrix geometries  by a commutator 
\begin{equation*}
D = -i\,[ L , \cdot ] \, , \quad L \in \mathcal{L}_N \, ,
\end{equation*}
where ${\mathcal{H}_N}$ and ${\mathcal{L}_N}$ denote the space of ${N \times N}$ Hermitian and   anti-Hermitian matrices,  respectively. The Dirac operators for these geometries  act on the Hilbert space $ \mathrm{M}_N (\mathbb{C})$, and 
\begin{equation*}
d D = d H = \prod_{i=1}^N d H_{ii} \, \prod_{1 \leq i < j \leq N}  d ({\text{Re}} (H_{ij})) \, d ({\text{Im}} (H_{ij})) \, 
\end{equation*}
is the canonical Lebesgue measure on ${\mathcal{H}_N}$. In this paper we consider a quartic action
\begin{equation*}
	S(D) = g\tr D^2 + \tr D^4,
	\end{equation*}
where $g$ is a coupling constant. 
This action can be expressed  in terms of Hermitian and anti-Hermitian matrices for $(1, 0)$ and $(0, 1)$ geometries, respectively.   The results are as follows:  
	\begin{equation} \label{bt1}
	S^{(1,0)}(D) = 2N(g \tr H^2 +   \tr H^4) + 2g (\tr H)^2 + 8 \,\tr H \tr H^3 + 6(\tr H^2)^2 ,
	\end{equation}
	\begin{equation*}
	S^{(0,1)}(D) = 2N(-g\tr L^2 +  \, \tr L^4) + 2g (\tr L)^2 - 8 \,\tr L \tr L^3 + 6(\tr L^2)^2.
	\end{equation*}
	
	Furthermore, one may write the $(0,1)$ model in terms of Hermitian matrices as well.   This gives us
	
	\begin{equation} \label{bt2}
	S^{(0,1)}(D) = 2N(g\tr H^2 +  \, \tr H^4) - 2g (\tr H)^2 - 8 \,\tr H \tr H^3 + 6(\tr H^2)^2.
	\end{equation}
	We can thus treat both models as  a bitracial matrix model. The particular structure of the $H$-action is solely dictated by the quartic $D$-action.

This paper is organized as follows. In Section two we derive the saddle point equations for models  of type $(1, 0)$ and $(0, 1)$ and write down the action functional for their equilibrium measures. The corresponding Euler-Lagrange equations  are in fact integral equations, whose solutions correctly recover the sought after equilibrium measures. In Section three we present the equilibrium measures of both models which turns out to be identical, in both the single and double cut cases. We discuss their behaviour and the behaviour of limiting measure of the Dirac operators  around the phase transition. Detailed calculations are given in the Appendix in Section five. Our conclusions and outlook are presented in Section four, where we suggest a few possible directions one may approach more complicated multi-matrix models associated with these spaces.

\section{The Saddle Point Equation}

	As we pointed out in the introduction, the $(1, 0)$ and $(0, 1)$ random noncommutative geometries 
	can be  described as  bitracial Hermitian matrix models 	defined by \eqref{bt1} and \eqref{bt2}, respectively. 
		Now our partition functions are  of the form 
		\begin{equation} \label{matrix integral}
			\mathcal{Z} = \int_{ \mathcal{H}_{N}} e^{F(H)} d H,
		\end{equation}
		where 
$F$ is a unitary invariant function.
More concretely,  the $(1, 0)$ and $(0, 1)$ models,  when passed to eigenvalue space using Weyl's integration formula \cite{Random Matrices} are of the form
\begin{equation} \label{jpd}
 	\mathcal{Z} = C_{N} \int_{\mathbb{R}^{N}} e^{-N\sum_{i=1}^{N}V(\lambda_{i}) - \sum_{i,j=1}^{N}U(\lambda_{i}, \lambda_{j})} \prod_{1\leq i  < j \leq N}(\lambda_{i}-\lambda_{j})^{2}d\lambda_{1}...\lambda_{N},
\end{equation}
where $C_N$ is a constant whose exact value is not important for this paper. The polynomial $V$ comes from the single trace contribution in the potential, and $U$ the bi-tracial contribution. In the 
$ (1, 0)$ case passing equation (\ref{bt1}) to eigenvalues we have
\begin{equation*}
	V(s) = 2g s^2+ 2s^4,
\end{equation*}
\begin{equation*}
	U(s,t) = 2 gst  + 8 st^{3} + 6 s^{2}t^{2},
\end{equation*}
and for the $(0, 1)$ case passing equation (\ref{bt1}) to eigenvalues
\begin{equation*}
V(s) = 2g s^2+ 2s^4,
\end{equation*}
\begin{equation*}
U(s,t) = -2 gst  - 8 st^{3} + 6 s^{2}t^{2},
\end{equation*}
for a coupling constant $g$. As we will  see later,  by symmetry,  the terms  with odd number of $s$ or $t$  will  have no contribution to the limiting spectral distribution. Consider the limit 
\begin{equation*}
	\lim_{N\rightarrow \infty} \frac{1}{N^{2}}\ln \left[\int_{\mathbb{R}^{N}} e^{-N\sum_{i=1}^{N}V(\lambda_{i}) - \sum_{i,j=1}^{N}U(\lambda_{i},\lambda_{j})} \prod_{1\leq i  < j \leq N}(\lambda_{i}-\lambda_{j})^{2}d\lambda_{1}...\lambda_{N}\right].
\end{equation*}
Heuristically, one expects that the leading contribution to the integral comes from places where the integrand achieves its maximum using Laplace's method.  Such a maximizing point $\{\lambda^{*}_{1},...,\lambda^{*}_{N}\}$ can be found by setting the partial derivatives equal to zero, from which one can construct the normalized counting measure \text{expressed in terms of delta functions}

\begin{equation} \label{empirical}
	\alpha_{N} =\frac{1}{N} \sum_{i=1}^{N}\delta_{\lambda_{i}^{*}}.
\end{equation}

It is not hard to see from \cite{Deift} that there indeed exists a point in $\mathbb{R^{N}}$ that maximizes the integrand, hence $\alpha_{N}$ exists. Furthermore, it converges in the vague topology to a unique probability measure that has compact support. The limit of  $\alpha_{N}$, denote it $\mu^{Q}$, is referred to as the equilibrium measure and it is the Borel probability measure that minimizes the energy functional 

\begin{align}\label {Vector equilibrium}
&I ^ {Q} (\mu): =\int V ( {s}) d\mu ( {s}) +\int\int U ( {s}, {t}) d\mu ( {s}) d\mu ( {t}) -\int\int \textrm {log} | s - t | d\mu (s) d\mu (t).
\end{align}
 
Now assume that $\mu^{Q} = \Psi(x)dx$ where $\Psi(x)$ is continuous. If we can show that such a measure satisfies our conditions, then by uniqueness it must be the equilibrium measure. From the Euler-Lagrange equations derived from this problem we get the corresponding saddle point equation
\begin{align}\label{saddle point eqn}
\begin{split}
	\text{P.V.}\int_{\text{supp} \Psi} \frac{\Psi(s)}{s-x}ds &= \frac{1}{2}\left[4gx + 8x^{3}+ \frac{\partial}{\partial x}\int (\pm 2gxy\pm 8xy^{3} + 6x^{2}y^{2} )\Psi(y)dy\right]\\
&= 2gx + 4x^{3}+ \int (\pm gy \pm 4y^{3} + 6xy^{2} )\Psi(y)dy\\
&= 2gx + 4 x^{3} \pm g m_{1} \pm 4 m_{3} + 6 m_2 x. 
\end{split}
\end{align}
Here, the plus sign refers to the $(1, 0)$ model, the minus sign refers to the $(0, 1)$ model, and $m_{n}$ denotes the $n$th moment of the distribution. We now consider the two possible cases that the  $\Psi$  support has one or   two cuts. 

\section{Results}
\subsection{Symmetric 1-cut region}
Assume that $\text{supp} \, \Psi =[-2a, 2a]$ for some $a>0$. In general, the integral equation
\begin {equation*}
\int_{-2 a} ^ {2 a} \frac {\Psi (y) d y} {y - x} = A x + B x ^ {3} + C
\end {equation*}
has the solution
\begin {equation*}
\Psi (x) =\frac {1} {\pi} (A +2B a ^ {2} + Bx ^ {2})\sqrt {4 a ^ {2} - x ^ {2}}_{+}.
\end {equation*}
Based on the limiting behavior of the Stieltjes transform it can be deduced that
\begin{equation}\label{limiting Plemelj behaviour}
	\frac{1}{a}= 2Aa + 6Ba^{3}.
\end{equation}
Additionally the first few moments are
\begin{align*}
	m_{1}&= 0\\
	m_{2}&=2Ba^{6}  + a^{2}\\
	m_{3}&= 0.\\ 
\end{align*}
For more details see appendix A.1. The odd moments are equal to zero by symmetry. In both $(0, 1)$ and $(1, 0)$ models
\begin{align*}
m_{2}&=8 a^{6} + a^{2},
\end{align*}
and equation (\ref{limiting Plemelj behaviour}) gives us

\begin{align} \label{1,0 poly}
\begin{split}
g&= -24a^{6} -9a^{2} +\frac{1}{4a^{2}}.
\end{split}
\end{align}

 One could also find the roots of the degree eight polynomial (which can be reduced to a degree 4 by a change of variables) to get an expression for the limiting distribution strictly in terms of $g$.  However, the actual expression for the roots in terms of $g$ is unsightly so it will not be included in this document. It can easily be found with any computer algebra software such as $\text{\it maple }$ or $\text{\it mathematica}$ and there are in fact only two unique real roots that differ only by sign. The spectral density function is
   
\begin{align*}
\Psi(x)&=  \frac{1}{\pi}(-4a^{2} +\frac{1}{2a^{2}} + 4x^{2})\sqrt{4a^{2}-x^{2}}_{+}.\\	
\end{align*}

When we numerically graph this distribution we find that at some critical value of $g$ it dips below zero and splits into a two-cut case. See the below figure.
\begin{figure}[H]\label{DOS single cut}
	\centering
	\includegraphics[width=0.5\textwidth]{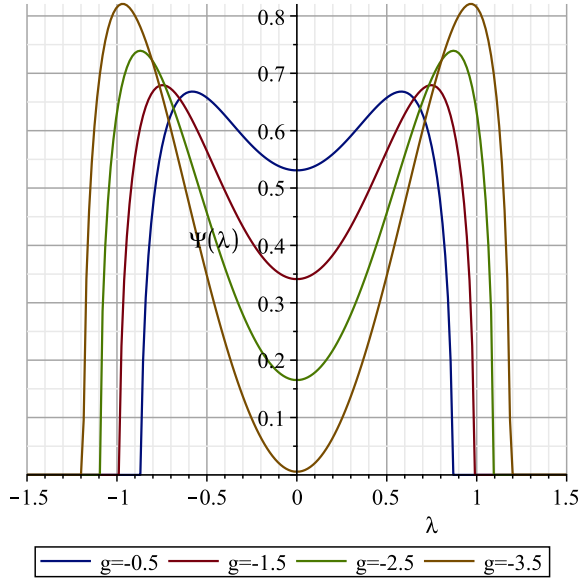}
	\caption{The equilibrium measure for the $(1, 0)$ and $(0, 1)$ models from the single cut analysis.}
\end{figure}
A precise critical value is found by setting $\Psi(x) =0$ and $x=0$ and isolating for $a$, giving us $a_{c} = \frac{1}{\sqrt[4]{8}}$. Plugging $a_{c}$ into (\ref{1,0 poly}) we find
\begin{align*}
	g_{c} = -\frac{5\sqrt{2}}{2} \approx -3.5,
\end{align*}
which is the value found by Barrett and Glaser in  \cite{Barrett2016} by computer simulation  as  evidence for a phase transition in the $(1, 0)$ model. However, they  did not find any evidence of a phase transition in the $(0, 1)$ case. We are unsure of the cause of this discrepancy, but  it may be due to  the small matrix size (10 by 10). We emphasize that results seen in this paper are \textbf{exclusively in the large $N$ limit}, which may exhibit  different phenomenon than finite $N$. We were genuinely surprised that the critical point in the limit of the (1,0) model was so close to that found for finite matrix size in \cite{Barrett2016}. In the large $N$ limit several of the terms  in $U$ vanish due to asymmetry. The lack of contribution from these terms in the limit may be the cause for discrepancy in the behaviour of these models.
\subsection{Symmetric 2-cut region}
It would appear that we now have to change our analysis to the double-cut case for $g<g_{c}$. Assume $\text{supp} \, \Psi =[-a,-b]\cup [b,a]$. Via the resolvent method, our spectral density function is of the form
\begin{align*}
	\begin{split}
	\Psi(x) = \frac{2}{\pi}|x| \sqrt{(x^{2}-a^{2})(b^{2}-x^{2})}_{+},
	\end{split}
\end{align*}
where the moments are
\begin{align*}
	m_{1}&=0\\
	m_{2}&= -\frac{1}{5}g\\
	m_{3}&= 0.
\end{align*}
We have the additional conditions that 

\begin{equation}\label{a in terms of g}
	a^{2}= -\frac{1}{5}g +\frac{\sqrt{2}}{2},
\end{equation}
and 
\begin{equation}\label{b in terms of g}
b^{2}= -\frac{1}{5}g -\frac{\sqrt{2}}{2}.
\end{equation}
For more details see A.2.

\begin{figure}[H]\label{DOS of (1,0) multicut}
	\centering
	\includegraphics[width=0.5\textwidth]{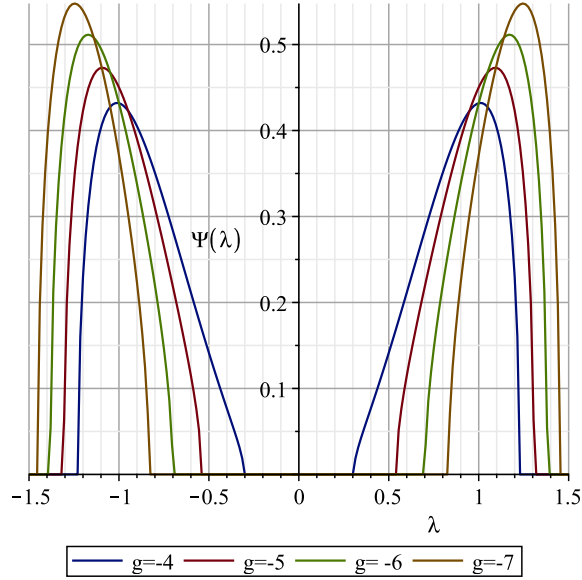}
	\caption{The equilibrium measure  for $(1, 0)$ and $(0, 1)$ from the double cut analysis.}
\end{figure}





\subsection{Spectrum of the Dirac operators}
Let $\Psi_{N} (\lambda_{1}, ...,\lambda_{N})$ be the joint probability distribution from equation \eqref{jpd}. Recall the $l$-th marginal distribution density defined as 
\begin{equation*}
	\Psi_{N}^{l}(\lambda_{1},...,\lambda_{l}) = \int_{\mathbb{R}^{N-l}} \Psi_{N} (\lambda_{1}, ...,\lambda_{N}) d\lambda_{l+1}...d\lambda_{N}.
\end{equation*}
It is well known that for invariant ensembles where the $U$ interaction term is absent the 1-th marginal distribution converges to the equilibrium measure \cite{Stat mech}
\begin{equation*}
	\lim_{N\rightarrow \infty} \Psi^{1}_{N}(x) = \Psi(x).
\end{equation*}
Furthermore from \cite{Universality}, for any distinct eigenvalues there is a weak factorization of $\Psi_{2}^{N} (\lambda_{1}, \lambda_{2})$ into the product $\Psi^{N}_{1}(\lambda_{1})\Psi^{N}_{1}(\lambda_{2})$, with a correction term. Additionally, this correction term vanishes in the large $N$ limit, giving a product distribution
\begin{equation*}
	\Psi_{2}(\lambda_{1},\lambda_{2}) = \Psi(\lambda_{1})\Psi(\lambda_{2}).
\end{equation*}
We will assume these results hold when $U$ is nonzero allowing us to write an expression for the spectral density of the Dirac operator, as done in \cite{Barrett2016}. Dirac operators are linear operators on a $N^{2}$ dimensional vector space so they can have up to $N^{2}$ distinct eigenvalues $\omega_{jk}$. These eigenvalues can in fact be written in terms of the eigenvalues $\lambda_{j}$ of the random matrix $H$ or the $i \lambda_{j}$ of $L$. In particular for the $ (1, 0)$ and $(0, 1)$ cases, they can be decomposed as 
\begin{equation*}
	\omega_{jk} = \lambda_{j}\pm \lambda_{k},
\end{equation*}
respectively. Consider the expectation value of an observable $f(\omega)$:
\begin{equation*}
	\langle f\rangle =\int\int_{\text{supp} \Psi}f(\lambda\pm \lambda') \Psi(\lambda)\Psi(\lambda')d\lambda d\lambda',
\end{equation*}
which can be rewritten as
\begin{equation*}
	=\int f(\lambda \pm \lambda') \left(\int \Psi(\lambda) \Psi(\lambda')d\lambda\right) d\lambda' = \int f(\omega) \left(\int \Psi(\omega \mp \lambda) \Psi(\lambda')d\lambda\right) d\omega = \int f(\omega) \rho(\omega) d\omega,
\end{equation*}
where 
\begin{equation*}
	\rho(\omega) := \int \Psi(\omega \mp \lambda) \Psi(\lambda)d\lambda
\end{equation*}
 is the spectral density of the Dirac operator. This convolution of $\Psi$ is an elliptic integral but can be graphed numerically for various values of $g$. Note that $\Psi$ is symmetric, so $\rho$ is the same for both models.

\begin{figure}[H]
	\centering
	\begin{subfigure}{0.5\textwidth}
		\centering
		\includegraphics[width=0.8\linewidth]{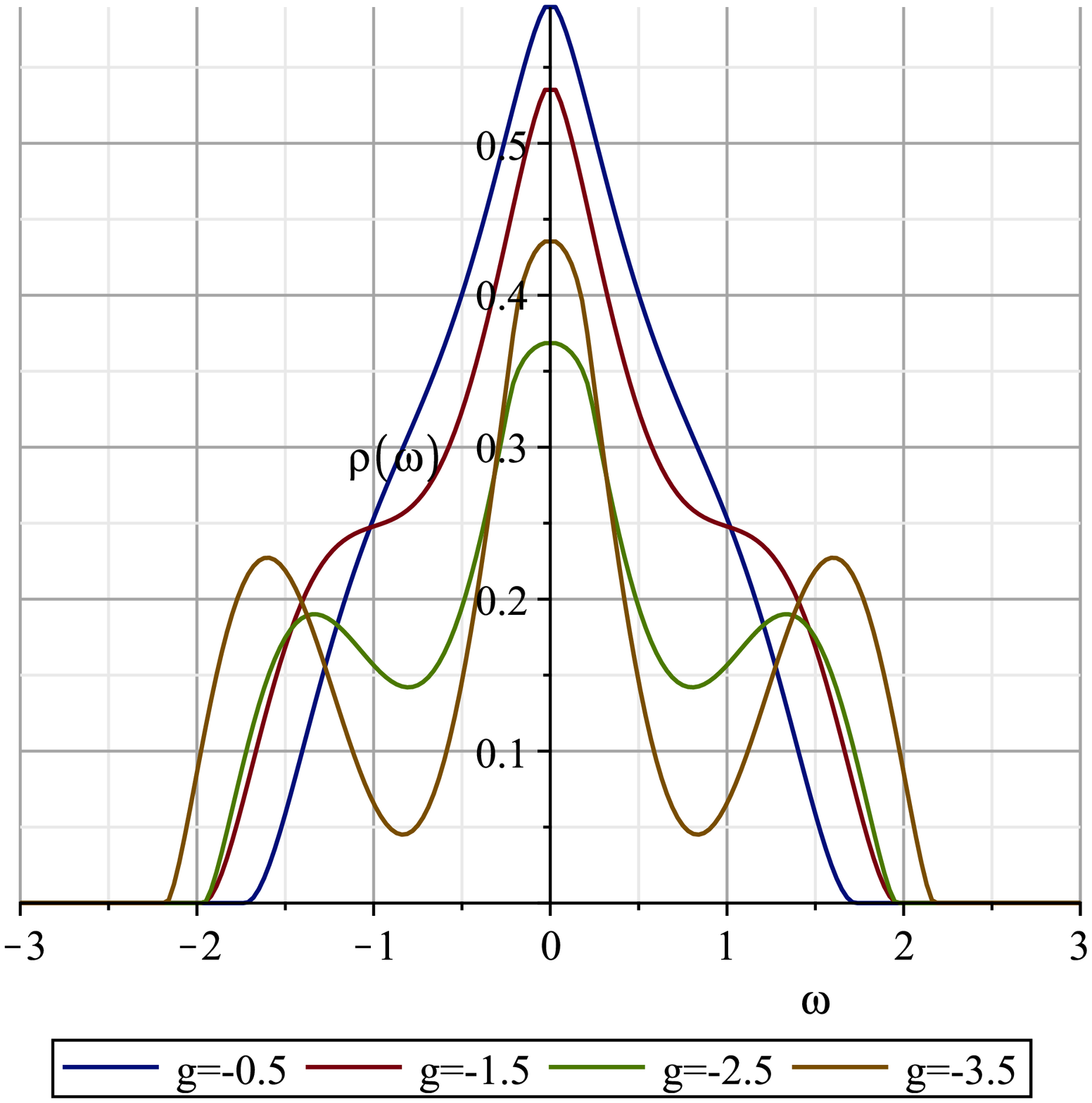}
		\label{fig:sub1}
	\end{subfigure}%
	\begin{subfigure}{0.5\textwidth}
		\centering
		\includegraphics[width=0.8\linewidth]{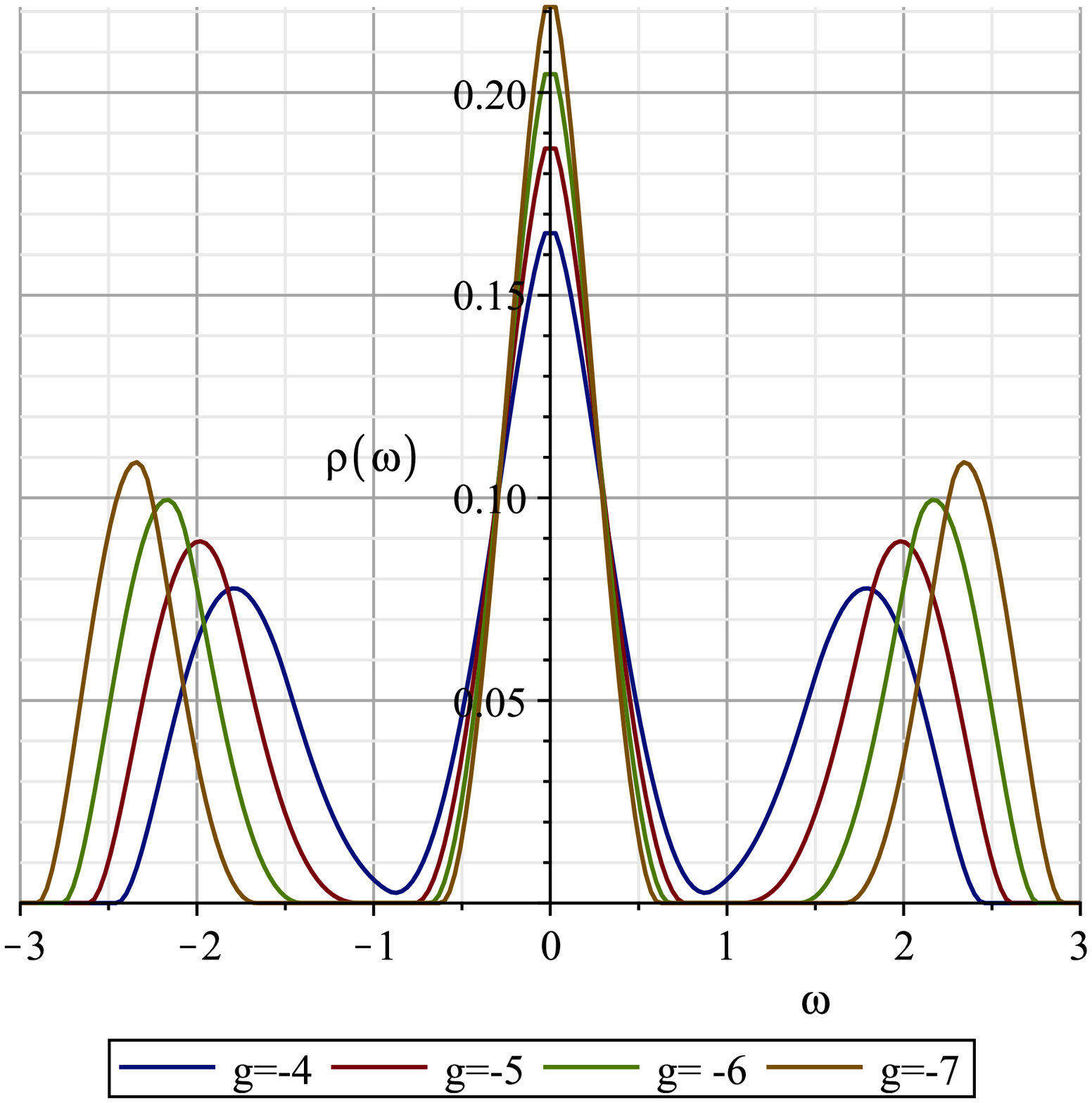}
		\label{fig:sub2}
	\end{subfigure}
	\caption{The density of states   for  type $(1, 0)$ and $(0, 1)$ Dirac operators.}
	\label{fig:test}
\end{figure}

\section{Conclusion}
In this paper we considered two random matrix models suggested by noncommutative geometric models of Barrett and Glaser \cite{Barrett2016}. We used analytic techniques to study these models and compared our results   with the results obtained by  computer simulations in \cite{Barrett2016}. Using equilibrium measure techniques we were able to explicitly compute the density of states in terms of the coupling constant $g$, and found that they were the same for both models. The precise critical value is found to be $g_{c} = -\frac{5\sqrt{2}}{2}$. In the process we explicitly computed the resolvent which is the same for both operators. One can also apply these techniques to $(1, 0)$ and $(0, 1)$ geometries with   higher order polynomial action functionals. It would also be interesting to compare these models to their corresponding formal matrix models using the techniques developed in \cite{Eynard2018}.

The more complicated matrix geometries are much harder to examine using classical analytic methods since they maintain no form of unitary invariance. Consider for example the type ${(2,0)}$ geometry from \cite{Barrett2016}. It contains the following term:
$$-8N\, \tr (H_{1}H_{2}H_{1}H_{2})$$
 which is far from being bi-unitary invariant and one cannot even apply the Harish-Chandra-Itzykson-Zuber formula \cite{planar II}. Models of a similar form have been approached using the method of characteristic expansion \cite{Migdal,Kazakov 1, Kazakov 2}. Another method to gain insight into these models might be through blobbed topological recursion \cite{blobbed, AK}.  Additionally one might be able to apply results from Free Probability theory as suggested in \cite{Sanchez}. Finally, in a different but related  direction, it would be interesting   to apply the random matrix theory techniques  we used in this paper to finite spectral triples based on {\it operator systems} developed recently in  \cite{Connes-VS}.

\section{Appendix}
In this appendix we present explicit calculations leading to the solution of the integral equation (\ref{saddle point eqn}). What one may notice is that these quartic single matrix multi-trace models can be handled in almost the exact same manner as the quartic case. The only difference is that the coefficients of the saddle point equation may contain the moments of the equilibrium measure. However, one must first establish the existence of such a measure to guarantee the existence of these moments. From there the techniques used are the same as for the quartic case.  In the single cut analysis we follow the traditional approach  introduced in \cite{Brezin1978} and for multi-cut case we follow the techniques used in \cite{Deift}. 
\subsection*{A.1 Single cut analysis}
\begin {prop} \label{Plemelj method}
Consider  the following integral equation
\begin {equation*}
	\text{P.V.}\int_{-2 a} ^ {2 a} \frac {\Psi (y) d y} {y - x} = A x + B x ^ {3} + C,
\end {equation*}
where $\Psi (y) $ is a continuous probability distribution defined on the interval $[-2a,2a]$ for a positive nonzero real number $a$, and where A, B, C are all real constants. Then
\begin {equation*}
\Psi (x) =\frac {1} {\pi} (A +2B a ^ {2} + Bx ^ {2})\sqrt {4 a ^ {2} - x ^ {2}}_{+}. 
\end {equation*}
\end {prop}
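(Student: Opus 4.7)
The plan is to solve the singular integral equation via the Cauchy (resolvent) method, converting it into a scalar Riemann--Hilbert problem on the cut $[-2a,2a]$. I would introduce the Stieltjes transform $G(z) := \int_{-2a}^{2a} \Psi(y)/(z-y)\,dy$, which is holomorphic on $\mathbb{C}\setminus[-2a,2a]$ and satisfies $G(z) = 1/z + O(1/z^2)$ as $z\to\infty$ since $\Psi$ is a probability density. The Sokhotski--Plemelj jump relations rewrite the integral equation as boundary data on the cut: the prescribed principal-value integral determines the sum $G(x+i0)+G(x-i0)$ (a cubic in $x$), while the unknown density is captured by the jump $G(x+i0)-G(x-i0)=\mp 2\pi i\,\Psi(x)$.

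Next I would adopt the standard ansatz $G(z)=P(z)+Q(z)R(z)$, where $R(z)=\sqrt{z^2-4a^2}$ is the branch with $R(z)\sim z$ at infinity and $R(x\pm i0)=\pm i\sqrt{4a^2-x^2}$ on the cut, and $P,Q$ are polynomials. Since $R$ flips sign across the cut while polynomials do not, the sum boundary condition pins down $P$ as the (signed) cubic from the RHS of the integral equation, and the jump condition expresses $\Psi(x)$ in terms of $Q(x)\sqrt{4a^2-x^2}$. The polynomial $Q$ is then pinned down by demanding that $G(z)$ actually decays at infinity: expanding $R(z)=z-2a^2/z-2a^4/z^3-\cdots$ and cancelling the $z^{3}$, $z^{1}$, and $z^{0}$ terms of $P(z)+Q(z)R(z)$ forces $Q(z)=\pm(A+2Ba^2+Bz^2)$ and, for consistency, $C=0$. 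Matching the $1/z$ coefficient against the normalization $\int \Psi = 1$ reproduces the constraint $1/a = 2Aa+6Ba^{3}$ of equation (\ref{limiting Plemelj behaviour}). Reading off the imaginary part of $G(x+i0)$ yields the claimed formula $\Psi(x)=\frac{1}{\pi}(A+2Ba^2+Bx^2)\sqrt{4a^2-x^2}_{+}$.

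The crucial technical step, and what I expect to be the main subtlety, is to justify that the ansatz exhausts all admissible solutions, i.e.\ that $(G-P)/R$ is actually a polynomial. The argument is standard but should be spelled out: by the sum condition, the ratio has no jump across $[-2a,2a]$ and is holomorphic there, while near the endpoints $\pm 2a$ the potential $R^{-1}$ singularity is removable because the prescribed right-hand side is bounded; polynomial growth at infinity inherited from the cubic RHS data then forces $(G-P)/R$ to be a polynomial of degree at most two by Liouville. As a sanity check (and an independent route to the formula), I would verify the answer directly using the classical identities $\mathrm{P.V.}\int_{-2a}^{2a}\sqrt{4a^{2}-y^{2}}/(y-x)\,dy=-\pi x$ and $\mathrm{P.V.}\int_{-2a}^{2a} y^{2}\sqrt{4a^{2}-y^{2}}/(y-x)\,dy=2\pi a^{2} x-\pi x^{3}$, which follow at once from the Stieltjes transform of the semicircle law and reduce the verification to matching a pair of linear coefficients.
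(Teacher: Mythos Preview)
Your approach is correct and is essentially the same strategy the paper points to in its one-line proof (``Plemelj formula together with the Zhukovsky transform''): you turn the principal-value equation into jump data for the Stieltjes transform via Sokhotski--Plemelj and then exploit the one-cut square-root structure. The only cosmetic difference is packaging: the paper parametrizes the cut by the Zhukovsky map $x=a(\zeta+\zeta^{-1})$, under which $\sqrt{x^{2}-4a^{2}}=a(\zeta-\zeta^{-1})$, and reads off the coefficients by expanding in $\zeta$ (this is exactly what appears in the subsequent asymptotic identity $\tfrac{1}{a}=\lim_{z\to\infty} z[2Aaz^{-1}+2Ba^{3}(3z^{-1}+z^{-3})]$), whereas you work directly with the branch $R(z)=\sqrt{z^{2}-4a^{2}}$ and the ansatz $G=P+QR$. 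These are equivalent bookkeeping devices for the same Riemann--Hilbert problem. Your Liouville argument for why $(G-P)/R$ must be a polynomial, the observation that decay at infinity forces $C=0$, and the direct verification via the semicircle Hilbert-transform identities are all welcome additions that the paper omits.
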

\noindent This follows from the Plemelj formula together with the Zhukovsky transform.

Now the value of $a$ can be found by considering the limiting behavior of the distribution, which we will show in the following section. Recall the limiting behavior of the Stieltjes transform of $\psi$ as $x \rightarrow \infty$ 
\begin{equation} \label{Hilbert transform asymptotics}
 \int_{\text{supp}\Psi} \frac {\Psi (y) d y} {y - x}\sim - \frac{1}{x} .
\end{equation}
Using the Zhukovsky transform from the Proposition \ref{Plemelj method} we also have that 

\begin{align}\label{asymptotic equation}
\begin{split}
\frac{1}{a} &=\lim_{z\rightarrow\infty} z[2Aaz^{-1} + 2Ba^{3}(3z^{-1}+z^{-3})]\\
&= 2Aa + 6Ba^{3}.
\end{split}
\end{align}

\begin{prop}\label{second moment of limiting distribution}
	The  moments of the  limiting distribution are given by 
	\begin{equation}\label{moments}
	m_{k} = \begin{cases}
	0 & \text{if $k$ is odd} \\
	C_{k/2}\left(\frac{6k}{k+4} Ba^{k+4} + a^{k}\right) & \text{if $k$ is even } 
	\end{cases}
	\end{equation}
	where $C_{k}$ denotes the $k$-th Catalan number.
\end{prop}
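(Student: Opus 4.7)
The approach is a direct computation of the integral $m_k = \int_{-2a}^{2a} x^k \Psi(x)\, dx$ against the explicit density $\Psi(x) = \frac{1}{\pi}(A + 2Ba^2 + Bx^2)\sqrt{4a^2 - x^2}_{+}$ furnished by Proposition \ref{Plemelj method}, followed by algebraic simplification using the normalisation constraint \eqref{asymptotic equation} and the Catalan recurrence.

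For the odd case, I would simply note that $\Psi$ is an even function (both the prefactor $A + 2Ba^2 + Bx^2$ and the square root are even), so the integrand $x^k \Psi(x)$ is odd whenever $k$ is odd, and the integral over the symmetric interval $[-2a, 2a]$ vanishes. For even $k = 2n$, the plan is to reduce everything to the well-known semicircle moments. Specifically, the rescaled semicircle law has moments
\begin{equation*}
\int_{-2a}^{2a} x^{2n}\sqrt{4a^2 - x^2}\, dx = 2\pi a^{2n+2} C_n,
\end{equation*}
which I would obtain via the substitution $x = 2a\sin\theta$ and the Beta-function identity (or just quote as the moments of the Wigner semicircle). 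Applying this to the two pieces of the integrand yields
\begin{equation*}
m_{2n} = 2(A + 2Ba^2) C_n a^{2n+2} + 2B C_{n+1} a^{2n+4}.
\end{equation*}

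Next I would eliminate $A$ using the asymptotic condition \eqref{asymptotic equation}, which gives $A + 3Ba^2 = 1/(2a^2)$, hence $A + 2Ba^2 = 1/(2a^2) - Ba^2$. Substituting produces
\begin{equation*}
m_{2n} = C_n a^{2n} + 2B a^{2n+4}\bigl(C_{n+1} - C_n\bigr).
\end{equation*}
The final step is to match the factor $2(C_{n+1} - C_n)$ with $C_n \cdot \frac{6k}{k+4}\big|_{k = 2n} = C_n \cdot \frac{12n}{2n+4}$. Using the standard Catalan recurrence $C_{n+1} = C_n \cdot \frac{2(2n+1)}{n+2}$, one computes
\begin{equation*}
2(C_{n+1} - C_n) = 2C_n\left(\frac{2(2n+1)}{n+2} - 1\right) = 2C_n \cdot \frac{3n}{n+2} = C_n \cdot \frac{12n}{2n+4},
\end{equation*}
which is exactly the claimed coefficient. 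Rewriting in terms of $k = 2n$ gives the formula stated in the proposition.

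I expect no serious obstacle: everything reduces to the semicircle moments plus one application of the Catalan recurrence. The only place one must be careful is in consistently using the normalisation \eqref{asymptotic equation} to eliminate $A$, since without it the formula would depend on two parameters rather than on $a$ and $B$ alone. Once that substitution is made, the identification with the Catalan recurrence is immediate.
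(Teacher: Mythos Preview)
Your proposal is correct and follows essentially the same route as the paper: vanish odd moments by symmetry, reduce even moments to the semicircle moments, eliminate $A$ via \eqref{asymptotic equation}, and finish with the Catalan recurrence. You are in fact slightly more explicit than the paper in carrying out the final identification $2(C_{n+1}-C_n)=C_n\cdot\frac{12n}{2n+4}$.
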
	
\begin{proof}
	Odd moments are zero by symmetry. Now recall that the semicircle law's moments are  
	
	\begin{align*}
	\int_{-2a}^{2a} x^{2k}\sqrt{4a^{2}-x^{2}}dx = \frac{\pi (2a)^{2}}{2}\left(\frac{2a}{2}\right)^{2k} C_{k} = 2\pi a^{2(k+1)} C_{k}.
	\end{align*}
	Using this formula and equation (\ref{asymptotic equation}), we find
	\begin{align*}
	m_{2l}&= \int_{-2a}^{2a}x^{2l}\Psi(x)dx = \frac{1}{\pi}\int_{-2a}^{2a}x^{2l}(A +2Ba^{2} + Bx^{2}\sqrt{4a^{2}-x^{2}})dx\\
	&= \frac{(A +2Ba^{2})}{\pi}\int_{-2a}^{2a} x^{2l}\sqrt{4a^{2}-x^{2}}dx+ \frac{B}{\pi}\int_{-2a}^{2a}x^{2(l+1)}\sqrt{4a^{2}-x^{2}}dx\\
	&=2C_{l}(A+2Ba^{2})a^{2(l+1)} + 2C_{l+1}Ba^{2(l+2)}\\
	&= 2B(C_{l+1}-C_{l})a^{2(l+2)} + C_{l}a^{2l}.
	\end{align*}
	
	Lastly, one uses the relation between consecutive Catalan numbers i.e $(4l+2)C_{l} = (l+2) C_{l+1}$, to arrive at formula \ref{moments}.
\end{proof}

 Alternatively, one can derive  the same results using residue calculations. Using (\ref{asymptotic equation})  we may write the spectral density as

\begin {equation}\label{spectral denisty in terms of a}
\Psi (x) =\frac {1} {\pi} \left(\frac{1-2Ba^{4}}{2a^{2}} + Bx ^ {2}\right)\sqrt {4 a ^ {2} - x ^ {2}}_{+} =\frac {1} {\pi} \left(\frac{1}{2 a^{2}}-Ba^{2} + Bx ^ {2}\right)\sqrt {4 a ^ {2} - x ^ {2}}_{+}.
\end {equation}

\subsection*{A.2 The two-cut case}
 Define

\begin{equation*}
p(z)= (z^{2}-a^{2})(z^{2}-b^{2}),
\end{equation*}
and consider the resolvent, sometimes called the one-point function

\begin{equation}\label{resolvent}
W(z) = \int_{\text{supp} \,\Psi} \frac{\Psi(s)}{s-z}ds.
\end{equation}
The moments of $\Psi$ can be computed  as
\begin{equation}\label{resolvent moments}
m_{k} = \int_{\text{supp} \,\Psi} x^{k}\Psi(x)dx =-\frac{1}{2\pi i}\oint z^{k} W(z)dz, 
\end{equation}
where the contour is some circle that encloses the support of $\Psi$. For convenience of calculations, we multiply  (\ref{resolvent}) by a factor of $\frac{1}{i\pi}$ giving us the Borel transform of $\Psi$:
\begin{equation*}
\mathcal{B}(z) = \frac{1}{i\pi}\int \frac{\Psi(s)}{s-z}ds.
\end{equation*}

Define $J'(x) $ to be equal to the right hand side of equation \eqref{saddle point eqn}. It is not hard to show that $\mathcal{B}(z)$ gives us a standard scalar Riemann-Hilbert Problem (see \cite{Deift} section 1.2 and 6.7) with a solution given by

\begin{equation}\label{B(z)}
\mathcal{B}(z)= \frac{\sqrt{p(z)}}{2\pi i}\int_{\text{supp}\,\Psi}\frac{\frac{i}{\pi }J'(s)}{\sqrt{p(s)}_{+}}\frac{ds}{s-z}.
\end{equation}

To find the values of $a$ and $b$ we will require a set of conditions that may be derived in the following manner. Expand $\mathcal{B}(z)$ as
\begin{equation}\label{Borel transform asympototics}
\mathcal{B}(z) = \frac{\sqrt{p(z)}}{2\pi i}\left(-\frac{1}{z}\right)\int_{\text{supp} \,\Psi} \frac{\frac{i}{\pi}J'(s)}{\sqrt{p(s)}_{+}} \sum_{n=0}^{\infty}\left(\frac{s}{z}\right)^n ds.
\end{equation}
From this we can see that in order for $\mathcal{B}(z)\rightarrow 0$ as $z\rightarrow \infty$ two conditions on the moments must be satisfied:

\begin{equation}\label{moment conditions}
\int_{\text{supp}\,\Psi} \frac{\frac{i}{\pi}J'(s)}{\sqrt{q(s)}_{+}}s^{j}ds=0, \qquad  \text{for} \, \,  j =0 \, \,  \text{or}\, \,  j=1.
\end{equation}
Combining the limiting behavior with (\ref{Borel transform asympototics}), we achieve a third condition
\begin{equation}\label{normalization condition}
\frac{i}{2\pi}\int_{\text{supp}\,\Psi} \frac{J'(s)}{\sqrt{q(s)}_{+}}s^{2}ds= 1.
\end{equation}
We will compute these integrals below, giving us the conditions that we desire. The integral (\ref{B(z)}) can be computed using  the residue theorem.
Consider a dumbbell contour $\Gamma _{\varepsilon}$ around each cut. Then we obtain  
\begin{equation}\label{dumbbell contour}
2 \int_{\text{supp}\,\Psi}\frac{\frac{i}{\pi}J'(s)}{\sqrt{p(s)}}_{+}\frac{ds}{s-z} =\lim_{\varepsilon \rightarrow 0}\int_{\Gamma_{\varepsilon}} \frac{\frac{i}{\pi}J'(s)}{\sqrt{p(s)}}\frac{ds}{s-z} .
\end{equation}
 We can now apply  the residue theorem and (\ref{dumbbell contour}) to \eqref{B(z)} where the exterior domain is the outside of the dumbbell contour
\begin{align*}
\begin{split}
\mathcal{B}(z) &=  \sqrt{(z^{2}-a^{2})(z^{2}-b^{2})} \left[\frac{i J'(z)}{2\pi \sqrt{(z^{2}-a^{2})(z^{2}-b^{2})}} + \frac{1}{4\pi i}\oint _{\mathcal{C}} \frac{\frac{i}{\pi}J'(s)}{\sqrt{(s^{2}-a^{2})(s^{2}-b^{2})}}\frac{ds}{s-z}\right]\\
&= \frac{i}{2\pi} \left[J'(z) + \frac{\sqrt{(z^{2}-a^{2})(z^{2}-b^{2})}}{2\pi i}\oint _{\mathcal{C}} \frac{J'(s)}{\sqrt{(s^{2}-a^{2})(s^{2}-b^{2})}}\frac{ds}{s-z}\right],
\end{split}
\end{align*}	
where $\mathcal{C}$ is a  contour containing $z$ and $\text{supp}\,\Psi$. Using the binomial and geometric series
\begin{align*}
\begin{split}
\frac{s^{\alpha}}{\sqrt{(s^{2}-a^{2})(s^{2}-b^{2})}}\frac{1}{s-z} &=s^{\alpha} \left(1-\frac{a^{2}}{s^{2}}\right)^{-\frac{1}{2}}\left(1-\frac{b^{2}}{s^{2}}\right)^{-\frac{1}{2}}\frac{1}{s}\frac{1}{1-\frac{z}{s}}\\
&=s^{\alpha-3}\left(\sum_{k=0}^{\infty}{-\frac{1}{2}\choose k} (-1)^{k} \frac{a^{2k}}{s^{2k}}\right)\left(\sum_{j=0}^{\infty}{-\frac{1}{2}\choose j} (-1)^{j} \frac{b^{2j}}{s^{2j}}\right)\left(\sum_{q=0}^{\infty}\frac{z^{q}}{s^{q}} \right)\\
&= \sum_{k,j,q=0}^{\infty} {-\frac{1}{2}\choose k}{-\frac{1}{2}\choose j}  (-1)^{j+k} a^{2k} b^{2j}z^{q}s^{-2k-2j-q +\alpha-3},
\end{split}
\end{align*}
where $
{n \choose k} $
are the general binomial coefficients. To compute the residue we have to sum over all the coefficients of the terms with $s^{-1}$.  For example, for $\alpha = 7$ we have $(k,j,q) = (0,0,5), (1,0,3), (0,1,3), (1,1,1),\\ (0, 2, 1)$, and $(2, 0, 1).$ Thus
\begin{align*}
\begin{split}
\text{Res}[s^{7}/ \sqrt{q(s)},\infty] &= z^{5} + {-\frac{1}{2}\choose 1}(-1)a^{2}z^{3} + {-\frac{1}{2}\choose 1}(-1)b^{2}z^{3} + {-\frac{1}{2}\choose 1}{-\frac{1}{2}\choose 1}a^{2}b^{2}z + {-\frac{1}{2}\choose 2}b^{4}z + {-\frac{1}{2}\choose 2}a^{4}z\\
&= \left(\frac{1}{4}a^{2}b^{2}+ \frac{3}{8}(a^{4}+b^{4})\right)z+\frac{1}{2}(a^{2}+b^{2})z^{3}+z^{5}
\end{split}
\end{align*}
In table \ref{table 1} we have record all  the needed residues. 

\begin{table}
\begin{center}
	\begin{tabular}{ |c|c|c|c| }
		\hline
		$\alpha$ & Res$[s^{\alpha}/\sqrt{q(s)},\infty]$  \\
		\hline
		 1 & 0 \\ 
		 \hline
		 2 & 1 \\ 
		 \hline
		 3 & z \\
		 \hline
		 4 & $\frac{1}{2}(a^{2}+b^{2}) + z^{2}$ \\
		 \hline
		 5 & $\frac{1}{2}(a^{2}+b^{2})z + z^{3}$\\
		 \hline
		 6 & $\frac{3}{8}(a^{4}+b^{4}) + \frac{1}{4}a^{2}b^{2} + \frac{1}{2}(a^{2}+b^{2})z^{2} + z^{4}$ \\
		 \hline
		 7 & $\left(\frac{1}{4}a^{2}b^{2}+ \frac{3}{8}(a^{4}+b^{4})\right)z+\frac{1}{2}(a^{2}+b^{2})z^{3}+z^{5}$\\
		\hline
	\end{tabular}
\end{center}
\caption{\label{table 1}Residue Calculations}
\end{table}

Now write $J(s)$ compactly as 
\begin{equation*}
J(s) = \frac{A}{2}s^{2} + \frac{B}{4}s^{4}
\end{equation*}
for some real coefficients $A$ and $B$, where $B\not =0$. Then

\begin{align*}
\begin{split}
\text{Res}[(As +Bs^{3})/ \sqrt{q(s)},\infty] &=Bz.
\end{split}
\end{align*}
Thus explicitly 
\begin{equation}\label{resolvent solution}
\mathcal{B}(z) = \frac{i}{2\pi}\left[Az+ Bz^{3} - Bz\sqrt{(z^{2}-a^{2})(z^{2}-b^{2})}\right],
\end{equation}
so that by the Plemelj formula
\begin{equation}\label{Psi multicut}
\Psi (x) = \frac{B}{2\pi}|x|\sqrt{(x^2-a^2)(b^2-x^2)}_{+}.
\end{equation}

We now return to finding expressions for $a$ and $b$. Starting with the moment conditions
\begin{align*}
\begin{split}
0&= \int_{\text{supp}\,\Psi} \frac{J'(s)}{\sqrt{q(s)}_{+}}s^{j}ds= \oint \frac{As^{1+j} + Bs^{3+j}}{\sqrt{q(s)}}ds
\end{split}
\end{align*}
for $j=0,1$. The $j=1$ condition at $z=0$ gives us
\begin{align}\label{a in terms of b and g}
\begin{split}
\frac{-2A}{B} &= a^2 + b^2.
\end{split}
\end{align}
Next consider the normalization condition  (\ref{normalization condition}). Equivalently we may write it as
\begin{align}\label{normalization condition in practice}
\begin{split}
1&= \frac{i}{2\pi}\int_{\text{supp}\,\Psi} \frac{J'(s)}{\sqrt{q(s)}_{+}}s^{2}ds= \frac{1}{2\pi i}\oint \frac{As^{3} + Bs^{5}}{\sqrt{q(s)}}ds\\
&= \frac{A}{2}(a^{2}+b^{2}) + \frac{3B}{8}\left(a^{2}+ b^{2}\right)^{2} -\frac{B}{2}a^{2}b^{2}.\\
\end{split}
\end{align}
Equating conditions (\ref{a in terms of b and g}) and (\ref{normalization condition in practice}), we find that
\begin{equation}\label{a in terms of m2 and g}
a^{2} = \frac{-A +\sqrt{2B}}{B},
\end{equation}
and
\begin{equation}\label{b in terms of m2 and g}
b^{2} = \frac{-A -\sqrt{2B}}{B}.
\end{equation}
Lastly we wish to compute the second moment of our distribution. Using  equations (\ref{resolvent moments}) and (\ref{resolvent solution}) we find that 

\begin{equation*}
m_{2} =- \frac{1}{2}\oint \frac{i}{2\pi}\left[Az^{3}+ Bz^{5} - Bz^{3}\sqrt{(z^{2}-a^{2})(z^{2}-b^{2})}\right] dz =-\frac{B}{2}\frac{1}{2\pi i}\oint z^{3}\sqrt{(z^{2}-a^{2})(z^{2}-b^{2})}dz,
\end{equation*}
where the contour is a large circle that encloses $\text{supp}\Psi$. By Substiuting $z=1/w$ we can solve this contour integral by computing the residue at infinity:
\begin{align}\label{double cut second moment}
\begin{split}
&= -\frac{B}{2}\frac{1}{2\pi i}\oint \frac{1}{w^{3}}\sqrt{\left(\frac{1}{w^{2}}-a^{2}\right)\left(\frac{1}{w^{2}}-b^{2}\right)}\frac{1}{w^{2}}dw\\
&= -\frac{B}{2}\frac{1}{2\pi i}\oint \frac{1}{w^{7}}\sqrt{\left(1-a^{2}w^{2}\right)\left(1-b^{2}w^{2}\right)}dw.\\
\end{split}
\end{align}
Since
\begin{align*}
\begin{split}
\frac{1}{w^{7}}\sqrt{\left(1-a^{2}w^{2}\right)\left(1-b^{2}w^{2}\right)}dw &= \sum_{k=0}^{\infty}\sum_{j=0}^{\infty}{\frac{1}{2}\choose k}{\frac{1}{2}\choose j}(-1)^{k+j}a^{2k}b^{2j} w^{2k + 2j -7},
\end{split}
\end{align*}
the contributing terms are from when $k+j = 3$, that is $(k, j)= (3, 0)$, $(2, 1)$, $(1, 2)$, and $(0, 3)$. Thus
\begin{align}\label{second moment C2}
\begin{split}
m_{2} &=-\frac{B}{2}\left(-{\frac{1}{2}\choose 3}a^{6} - {\frac{1}{2}\choose 2}{\frac{1}{2}\choose 1}a^{4}b^{2}  -{\frac{1}{2}\choose 1}{\frac{1}{2}\choose 2}a^{2}b^{4} -{\frac{1}{2}\choose 3} b^{6}\right)\\
&= \frac{B}{32}(a^{6} + b^{6} -a^{4}b^{2}- a^{2}b^{4})\\
&= \frac{B}{32}(a^{2}-b^{2})^{2}(a^{2}+b^{2}).
\end{split}
\end{align}
Similarly we may also compute a new normalization condition 

\begin{align}\label{normalization new}
\begin{split}
1 &=-\frac{B}{2}\left({\frac{1}{2}\choose 2}a^{4} + {\frac{1}{2}\choose 1}{\frac{1}{2}\choose 1}a^{2}b^{2}  +{\frac{1}{2}\choose 2} b^{4}\right)\\
&= \frac{B}{16}(a^{2}-b^{2})^{2}.
\end{split}
\end{align}
Combining equations (\ref{second moment C2}), (\ref{normalization new}), (\ref{a in terms of m2 and g}), and (\ref{b in terms of m2 and g}),  we arrive at

\begin{align}\label{second moment isolated C2}
\begin{split}
m_{2}&= -A/B.
\end{split}
\end{align}

\end{document}